\providecommand{\algorithmname}{Algorithm}
\let\oldforeign@language\foreign@language
\DeclareRobustCommand{\foreign@language}[1]{%
	\lowercase{\oldforeign@language{#1}}}
\let\oldforeign@language\foreign@language
\DeclareRobustCommand{\foreign@language}[1]{%
	\lowercase{\oldforeign@language{#1}}}
\newcommand{\MYfooter}{\smash{
		\hfil\parbox[t][\height][t]{\textwidth}{\centering
			\thepage}\hfil\hbox{}}}
\def\ps@IEEEtitlepagestyle{%
	\def\@oddhead{\parbox[t][\height][t]{\textwidth}{\centering \scriptsize
			Personal use of this material is permitted. Permission from the author(s) and/or copyright holder(s), must be obtained for all other uses. Please contact us and provide details if you believe this document breaches copyrights.\\
			\noindent\makebox[\linewidth]{}
		}\hfil\hbox{}}%
	\def\@evenhead{\scriptsize\thepage \hfil \leftmark\mbox{}}%
	\def\@oddfoot{\parbox[t][\height][l]{\textwidth}{
			\vspace{-20pt}{\rule{\textwidth}{0.4pt}}\\ \footnotesize\underline{To cite this article:}
			{\bf{\footnotesize\textcolor{red}{H. A. Hashim, A. E. E. Eltoukhy, K. G. Vamvoudakis, and M. I. Abouheaf "Nonlinear Deterministic Observer for Inertial Navigation using Ultra-wideband and IMU Sensor Fusion," 2023 IEEE/RSJ International Conference on Intelligent Robots and Systems (IROS), pp. 1-6, 2023.}}}\\
			\noindent\makebox[\linewidth]
		}\hfil\hbox{}}%
	\def\@evenfoot{\MYfooter}}
\newtheorem{lem}{Lemma}
\newtheorem{thm}{Theorem}
\newtheorem{assum}{Assumption}
\begin{document}
	\bstctlcite{IEEEexample:BSTcontrol}

	\title{Nonlinear Deterministic Observer for Inertial Navigation using Ultra-wideband and IMU Sensor Fusion}

\author{Hashim A. Hashim, Abdelrahman E. E. Eltoukhy, Kyriakos G. Vamvoudakis, and Mohammed I. Abouheaf
	\thanks{This work was supported in part by the National Sciences and Engineering Research Council of Canada (NSERC) under the grants RGPIN-2022-04937 and by the National Science Foundation under grant Nos. S\&AS-1849264, CPS-1851588, and CPS-2038589.}
	\thanks{H. A. Hashim is with the Department of Mechanical and Aerospace Engineering, Carleton University, Ottawa, ON, K1S 5B6, Canada (e-mail: hhashim@carleton.ca).
		A. E.E. Eltoukhy is with the Department of Industrial and Systems Engineering, The Hong Kong Polytechnic University, Hung Hum, Hong Kong (e-mail: abdelrahman.eltoukhy@polyu.edu.hk).	
		K. G. Vamvoudakis is with the Daniel Guggenheim School of Aerospace Engineering, Georgia Institute of Technology, Atlanta, GA, 30332, USA (e-mail: kyriakos@gatech.edu).
		M. I. Abouheaf is with the College of Technology, Architecture $\&$ Applied Engineering, Bowling Green State University, Bowling Green, OH, 43403, USA, (email: mabouhe@bgsu.edu)}
}



\maketitle

\begin{abstract}
Navigation in Global Positioning Systems (GPS)-denied environments
requires robust estimators reliant on fusion of inertial sensors able
to estimate rigid-body's orientation, position, and linear velocity.
Ultra-wideband (UWB) and Inertial Measurement Unit (IMU) represent
low-cost measurement technology that can be utilized for successful
Inertial Navigation. This paper presents a nonlinear deterministic
navigation observer in a continuous form that directly employs UWB
and IMU measurements. The estimator is developed on the extended Special
Euclidean Group $\mathbb{SE}_{2}\left(3\right)$ and ensures exponential
convergence of the closed loop error signals starting from almost
any initial condition. The discrete version of the proposed observer
is tested using a publicly available real-world dataset of a drone
flight.
\end{abstract}

\begin{IEEEkeywords}
Ultra-wideband, Inertial measurement unit, Sensor Fusion, Positioning
system, GPS-denied navigation.
\end{IEEEkeywords}

\IEEEpeerreviewmaketitle{}

\section{Introduction}

\IEEEPARstart{A}{ccurate} navigation in the absence of Global Positioning Systems (GPS)
signals is crucial for various robotics applications such as, autonomous
ground vehicles, unmanned aerial vehicles, and autonomous underwater
vehicles \cite{li2021openstreetmap,hashim2021_COMP_ENG_PRAC,zhai2020robust,zou2019structvio,hashim2021ACC}.
Common causes of GPS signal loss are multipath, obstructions, fading,
and denial in indoor environments which create the need for a backup
navigation solution. In the recent years, a number of GPS-denied navigation
solutions have been developed, for instance, vision-aided-based navigation
\cite{hashim2021_COMP_ENG_PRAC,zhai2020robust,hashim2021ACC,fornasier2022equivariant}
(monocular or stereo camera) and Light Detection and Ranging (LiDAR)-based
navigation or 3D laser scanners \cite{li2021openstreetmap}. However,
rapid advances in the areas of Micro-electromechanical systems (MEMS)
and communication technology motivates the development of navigation
solutions reliant on the fusion of Ultra-wideband (UWB) and Inertial
Measurement Unit (IMU) sensors due to their reduced price and weight,
and compactness in contrast with other aided navigation units. Moreover,
performance of the vision-based techniques degrades in low texture
environments, and both vision and LiDAR based systems are costly \cite{zou2019structvio}.
Therefore, UWB-IMU fusion could be an optimal fit for inertial navigation
of low-cost small-scale vehicles. While IMU enables rigid-body's orientation
estimation, UWB-IMU integration allows for rigid-body's position and
linear velocity estimation. Furthermore, UWB localization is possible
with Line-of-sight (LOS) and Non-line-of-sight (NLOS) communication.
\cite{yang2021novel,zihajehzadeh2015uwb,you2020data,wang2020multiple}.
However, the main challenge of UWB and IMU technology is high level
of measurement uncertainties.

Navigation based on UWB-IMU fusion requires the vehicle to be equipped
with UWB tag(s) and a 9-axis IMU (consisting of an accelerometer,
a gyroscope, and a magnetometer), along with accessibility of fixed
UWB anchors \cite{bottigliero2021low}. Since UWB and IMU measurements
are uncertain and exclude linear velocity (unlike GPS), a robust observer
design is key for the success of control missions. Recently, multiple
UWB-IMU-based filters belonging to the family of Kalman filters and
Particle Filters (PFs) have been proposed. For instance, a Kalman
Filter utilizing smooth set of coordinates compensated under NLOS
\cite{yang2021novel}, a Maximum Likelihood Kalman Filtering (MLKF)
\cite{wang2020multiple}, an extended Kalman filter \cite{bottigliero2021low},
and an Unscented Kalman Filter (UKF) neglecting the high-order terms
\cite{you2020data}. PFs are commonly classified as stochastic filters
\cite{tian2019resetting}, have been introduced to improve estimation
accuracy and address the consistency issue associated with Kalman-type
filters. The limitation of the above-mentioned Kalman-type filters
\cite{wang2020multiple,you2020data} is the reliance on linearization
around a nominal point ignoring high order nonlinear terms and lowering
the estimation accuracy \cite{kallianpur2013stochastic}. Moreover,
UKF utilizes a set of sigma points complicating filter design and
implementation. Meanwhile, PFs are challenged with higher computational
cost requirements and lack of a clear measure of optimal performance
\cite{kallianpur2013stochastic}. Note that state-of-the-art UWB-IMU-based
navigation filters rely on Euler angels which are subject to singularities
\cite{lefferts1982kalman} in particular for a rigid-body rotating
in three-dimensional (3D) space. Consequently, robust and accurate
navigation algorithms for GPS-denied environments remain a challenging
open problem.

\paragraph*{Contributions}This work aims to frame the navigation
kinematics on the Lie group of the extended Special Euclidean Group
$\mathbb{SE}_{2}\left(3\right)$. In this work, we consider a vehicle
equipped with a 9-axis IMU and at least one UWB tag navigating within
the range of fixed UWB anchors. A nonlinear deterministic navigation
observer on the Lie group of $\mathbb{SE}_{2}\left(3\right)$ reliant
on UWB and IMU measurements is proposed. The proposed observer successfully
addresses the unknown bias present in IMU measurements. The proposed
observer is tested using a publicly available real-world drone flight
dataset \cite{zhao2022uwbData}.

The remainder of the paper is organized as follows: Section \ref{sec:Preliminaries-and-Math}
contains preliminaries and mathematical notation. Section \ref{sec:SE3_Problem-Formulation}
formulates the problem. Section \ref{sec:UWB_Filter} introduces the
proposed nonlinear navigation observer on $\mathbb{SE}_{2}\left(3\right)$.
In Section \ref{sec:UWB_Simulations}, the proposed navigation observer
is validated using a real-world drone flight dataset. Finally, Section
\ref{sec:SE3_Conclusion} summarizes the work.

\section{Preliminaries and Math Notation \label{sec:Preliminaries-and-Math}}

In this paper, $\mathbb{R}^{a}$, $\mathbb{R}^{a\times b}$, and $\mathbb{R}_{+}$
stands for the set of $a$ dimensional Euclidean space, an $a$-by-$b$
dimensional space, and a set of nonnegative real numbers, respectively.
The Euclidean norm of $x\in\mathbb{R}^{n}$ is described by $||x||=\sqrt{x^{\top}x}$
while the Frobenius norm of $M$ is represented by $||M||_{F}=\sqrt{{\rm Tr}\{MM^{*}\}}$
with $*$ referring to a conjugate transpose. The $m$-by-$m$ identity
matrix is described by $\mathbf{I}_{n}$ and the $m$-by-$n$ zero
matrix is denoted as $0_{n\times m}$. The set of eigenvalues of $M_{r}\in\mathbb{R}^{n\times n}$
is denoted as $\lambda(M_{r})=\{\lambda_{1},\lambda_{2},\ldots,\lambda_{n}\}$.
For $M_{r}\in\mathbb{R}^{n\times n}$, $\overline{\lambda}_{M_{r}}=\overline{\lambda}(M_{r})$
and $\underline{\lambda}_{M_{r}}=\underline{\lambda}(M_{r})$ describe
the maximum and the minimum eigenvalues of $\lambda(M_{r})$, respectively.
For a vehicle navigating with six degrees of freedom (6 DoF), let
us denote $\left\{ \mathcal{I}\right\} $ as the fixed inertial-frame
and $\left\{ \mathcal{B}\right\} $ as the fixed body-frame. $\mathbb{SO}\left(3\right)$
denotes the Special Orthogonal Group $\mathbb{SO}\left(3\right)$
where \cite{hashim2018SO3Stochastic,hashim2019SO3Wiley}
\[
\mathbb{SO}(3)=\{R\in\mathbb{R}^{3\times3}|R^{\top}R=\mathbf{I}_{3}\text{, }{\rm det}(R)=+1\}
\]
with $R\in\mathbb{SO}\left(3\right)$ being rigid-body's orientation
known as attitude. $\mathfrak{so}(3)$ describes the Lie algebra of
$\mathbb{SO}(3)$ defined as
\begin{align*}
	\mathfrak{so}(3) & =\{[y]_{\times}\in\mathbb{R}^{3\times3}|y\in\mathbb{R}^{3}\}\\
	\left[y\right]_{\times} & =\left[\begin{array}{ccc}
		0 & -y_{3} & y_{2}\\
		y_{3} & 0 & -y_{1}\\
		-y_{2} & y_{1} & 0
	\end{array}\right]\in\mathfrak{so}\left(3\right),\hspace{1em}y=\left[\begin{array}{c}
		y_{1}\\
		y_{2}\\
		y_{3}
	\end{array}\right]
\end{align*}
with $\left[y\right]_{\times}^{\top}=-\left[y\right]_{\times}$ being
a skew symmetric matrix{\small{}.} The inverse mapping of $[\cdot]_{\times}$
a 3-dimensional vector ($\mathbf{vex}:\mathfrak{so}\left(3\right)\rightarrow\mathbb{R}^{3}$)
defined by
\begin{align*}
	\mathbf{vex}([y]_{\times})= & y,\forall y\in\mathbb{R}^{3}\\
	\boldsymbol{\mathcal{P}}_{a}(Y)= & \frac{1}{2}(Y-Y^{\top})\in\mathfrak{so}(3),\forall Y\in\mathbb{R}^{3\times3}
\end{align*}
and $\mathbf{vex}(\boldsymbol{\mathcal{P}}_{a}(Y))=\frac{1}{2}[Y_{3,2}-Y_{2,3},Y_{1,3}-Y_{3,1},Y_{2,1}-Y_{1,2}]^{\top}\in\mathbb{R}^{3}$.
Define the normalized Euclidean distance of $R\in\mathbb{SO}(3)$
as
\begin{equation}
	||R||_{{\rm I}}=\frac{1}{4}{\rm Tr}\{\mathbf{I}_{3}-R\}\in\left[0,1\right]\label{eq:NAV_Ecul_Dist}
\end{equation}
where $-1\leq{\rm Tr}\{R\}\leq3$. For $M\in\mathbb{R}^{3\times3}$,
$||MR||_{{\rm I}}=\frac{1}{4}{\rm Tr}\{M-MR\}$. For a rigid-body
traveling with 6 DoF, let $R\in\mathbb{SO}\left(3\right)$, $P\in\mathbb{R}^{3}$,
and $V\in\mathbb{R}^{3}$ denote the rigid-body's true orientation,
position, and velocity, respectively, where $R\in\{\mathcal{B}\}$
and $P,V\in\{\mathcal{I}\}$. Consider the extended form of the Special
Euclidean Group $\mathbb{SE}_{2}\left(3\right)=\mathbb{SO}\left(3\right)\times\mathbb{R}^{3}\times\mathbb{R}^{3}\subset\mathbb{R}^{5\times5}$
\cite{barrau2016invariant}
\begin{align}
	\mathbb{SE}_{2}(3) & =\{\left.X\in\mathbb{R}^{5\times5}\right|R\in\mathbb{SO}\left(3\right),P,V\in\mathbb{R}^{3}\}\label{eq:NAV_SE2_3}\\
	X=\Psi( & R,P,V)=\left[\begin{array}{ccc}
		R & P & V\\
		0_{1\times3} & 1 & 0\\
		0_{1\times3} & 0 & 1
	\end{array}\right]\in\mathbb{SE}_{2}\left(3\right)\label{eq:NAV_X}
\end{align}
where $X\in\mathbb{SE}_{2}\left(3\right)$ refers to the homogeneous
navigation matrix. Let us define $\Omega\in\mathbb{R}^{3}$, $V\in\mathbb{R}^{3}$,
and $a\in\mathbb{R}^{3}$ as the rigid-body's true angular velocity,
linear velocity, and acceleration, respectively, with $\Omega,a\in\{\mathcal{B}\}$.
Let us define the submanifold $\mathcal{U}_{\mathcal{M}}=\mathfrak{so}(3)\times\mathbb{R}^{3}\times\mathbb{R}^{3}\times\mathbb{R}\subset\mathbb{R}^{5\times5}$
as
\begin{align}
	\mathcal{U}_{\mathcal{M}} & =\{\left.u([\Omega\text{\ensuremath{]_{\times}}},V,a,\kappa)\right|[\Omega\text{\ensuremath{]_{\times}}}\in\mathfrak{so}(3),V,a\in\mathbb{R}^{3},\varrho\in\mathbb{R}\}\nonumber \\
	u( & [\Omega\text{\ensuremath{]_{\times}}},V,a,\kappa)=\left[\begin{array}{ccc}
		[\Omega\text{\ensuremath{]_{\times}}} & V & a\\
		0_{1\times3} & 0 & 0\\
		0_{1\times3} & \varrho & 0
	\end{array}\right]\in\mathcal{U}_{\mathcal{M}}\subset\mathbb{R}^{5\times5}\label{eq:NAV_u}
\end{align}
To know more about $\mathbb{SE}_{2}\left(3\right)$ and $\mathcal{U}_{\mathcal{M}}$
visit \cite{hashim2021_COMP_ENG_PRAC,hashim2021ACC}. 

\section{UWB, IMU, and Navigation\label{sec:SE3_Problem-Formulation}}

The UWB sensors have short wavelength which increases positioning
accuracy and their robustness against interference and fading (well-known
shortcomings of GPS communication) \cite{zihajehzadeh2015uwb,you2020data}.
UWB sensors are capable of LOS and NLOS communication and obstacle
penetration. Furthermore, UWB technology is low in power consumption,
compact, and light-weight warranting ease of implementation. Thus,
UWB sensors are fit for a positioning system as long as a robust estimation
algorithm able to reject uncertainties and produce a reasonable position
estimate is employed. UWB positioning can be achieved through various
techniques, such as Time Of Arrival (TOA), Angle of Arrival (AOA),
Time Difference Of Arrival (TDOA), and Received Signal Strength (RSS)
\cite{yang2021novel,zihajehzadeh2015uwb}. These approaches are very
close in concept. Generally, a UWB tag attached to a vehicle allows
to position it using range difference between several Base Stations
(BSs) \cite{wang2020multiple}. This work employs the more practical
and common TDOA technique. To implement TDOA, let us define $d_{j,i}\in\mathbb{R}$
as the range distance at the UWB tag, and $P=[x,y,z]^{\top}\in\mathbb{R}^{3}$
as the vehicle's position (with an attached UWB tag). The difference
in signals received from the $i$th fixed anchor $h_{i}=[x_{i},y_{i},z_{i}]^{\top}\in\mathbb{R}^{3}$
and the $j$th fixed anchor $h_{j}=[x_{j},y_{j},z_{j}]^{\top}\in\mathbb{R}^{3}$
is defined by
\begin{align}
	d_{j,i}= & ||P-h_{j}||-||P-h_{i}||\label{eq:UWB_dji}
\end{align}
The equation in \eqref{eq:UWB_dji} can be squared showing that 
\[
\frac{d_{j,i}^{2}+||h_{i}||^{2}-||h_{j}||^{2}}{2}=(h_{i}-h_{j})^{\top}P-d_{j,i}||P-h_{i}||
\]
In view of \eqref{eq:UWB_dji}, and considering $N$ TDOA measurements,
the following expression can be obtained:\begin{small}
	\begin{align}
		\frac{d_{2,1}^{2}+||h_{1}||^{2}-||h_{2}||^{2}}{2}= & (h_{1}-h_{2})^{\top}P-d_{2,1}||P-h_{1}||\nonumber \\
		\frac{d_{3,2}^{2}+||h_{2}||^{2}-||h_{3}||^{2}}{2}= & (h_{2}-h_{3})^{\top}P-d_{3,2}||P-h_{2}||\nonumber \\
		\vdots\nonumber \\
		\frac{d_{1,N}^{2}+||h_{N}||^{2}-||h_{1}||^{2}}{2}= & (h_{N}-h_{1})^{\top}P-d_{1,N}||P-h_{N}||\label{eq:UWB_djin}
	\end{align}
\end{small}Considering $||P-h_{3}||=d_{3,2}+||P-h_{2}||$, one finds
\begin{align*}
	||P-h_{3}|| & =d_{3,2}+d_{2,1}+||P-h_{1}||\\
	||P-h_{4}|| & =d_{4,3}+d_{3,2}+d_{2,1}+||P-h_{1}||
\end{align*}
As such, for $N$ TDOA measurements, one shows
\begin{align*}
	||P-h_{N}|| & =\sum_{i=2}^{N}d_{i,i-1}+||P-h_{1}||
\end{align*}
Define the following matrices
\[
A=\left[\begin{array}{cc}
	(h_{1}-h_{2})^{\top} & -d_{2,1}\\
	(h_{2}-h_{3})^{\top} & -d_{3,2}\\
	\vdots & \vdots\\
	(h_{N-1}-h_{N})^{\top} & -d_{N,N-1}\\
	(h_{N}-h_{1})^{\top} & -d_{1,N}
\end{array}\right]
\]
and
\[
B=\frac{1}{2}\left[\begin{array}{c}
	d_{2,1}^{2}+||h_{1}||^{2}-||h_{2}||^{2}\\
	d_{3,2}^{2}+||h_{2}||^{2}-||h_{3}||^{2}+2d_{3,2}\sum_{i=2}^{2}d_{i,i-1}\\
	d_{4,3}^{2}+||h_{3}||^{2}-||h_{4}||^{2}+2d_{4,3}\sum_{i=2}^{3}d_{i,i-1}\\
	\vdots\\
	d_{1,N}^{2}+||h_{N}||^{2}-||h_{1}||^{2}+2d_{1,N}\sum_{i=2}^{N}d_{i,i-1}
\end{array}\right]
\]
with $N$ being the number of fixed anchors or BSs accessed by the
tag. Hence, one obtains $A\overline{P}-B=0$ where $\overline{P}=[P^{\top},||P-h_{1}||]^{\top}\in\mathbb{R}^{4}$.
Thus, by defining $\delta=\frac{1}{2}(AP-B)^{\top}(AP-B)$ and applying
minimum mean square error, one obtains $\frac{\partial\delta}{\partial P}=A^{\top}(AP-B)=0$
such that
\begin{equation}
	\overline{P}=(A^{\top}A)^{-1}A^{\top}B\label{eq:UWB_Pbar}
\end{equation}
where $\overline{P}=[P^{\top},||P-h_{1}||]^{\top}\in\mathbb{R}^{4}$.

\begin{assum}\label{Assumption:assum_TOA} The rigid-body's position
	$P=[x,y,z]^{\top}$ can be uniquely defined in 3D space if at each
	time instant the tag is within range of at least 4 anchors. Analogously,
	3 or more are sufficient to position the rigid-body in 2D space.\end{assum}

The 9-axis IMU consists of three units: a gyroscope, an accelerometer,
and a magnetometer \cite{hashim2020SE3Stochastic,kang2019unscented,stovner2018attitude}.
The gyro supplies measurements of rigid-body's angular velocity expressed
as follows:
\begin{equation}
	\Omega_{m}=\Omega+b_{\Omega}\in\mathbb{R}^{3}\label{eq:UWB_Angular}
\end{equation}
with $\Omega$ being the true angular velocity and $b_{\Omega}$ referring
to unknown bias. The accelerometer provides acceleration measurements:
\begin{equation}
	a_{m}=R^{\top}(\dot{V}-\overrightarrow{\mathtt{g}})+b_{a}\in\mathbb{R}^{3}\label{eq:UWB_am}
\end{equation}
with $\overrightarrow{\mathtt{g}}=[0,0,g]^{\top}$, $g=-9.8\text{m}/\text{sec}^{2}$
denoting gravitational acceleration, and $\dot{V}$ referring to linear
acceleration. $b_{a}$ represents unknown bias. At low frequency,
$|\overrightarrow{\mathtt{g}}|>>|\dot{V}|$. Hence, $a_{m}$ can be
approximated by $a_{m}\approx-R^{\top}\overrightarrow{\mathtt{g}}+n_{a}$.
The magnetometer measurements are defined by
\begin{equation}
	m_{m}=R^{\top}m_{r}+b_{m}\in\mathbb{R}^{3}\label{eq:UWB_Mm}
\end{equation}
with $m_{r}=[m_{N},0,m_{D}]^{\top}$ being the earth-magnetic field
and $b_{m}$ referring to unknown bias. Three non-collinear observations
and measurements are necessary for attitude observation commonly obtained
as follows:
\begin{equation}
	\begin{cases}
		v_{1}=\frac{a_{m}}{||a_{m}||}, & r_{1}=\frac{-\overrightarrow{\mathtt{g}}}{||-\overrightarrow{\mathtt{g}}||}\\
		v_{2}=\frac{m_{m}}{||m_{m}||}, & r_{2}=\frac{m_{r}}{||m_{r}||}\\
		v_{3}=\frac{v_{1}\times v_{2}}{||v_{1}\times v_{2}||}, & r_{3}=\frac{r_{1}\times r_{2}}{||r_{1}\times r_{2}||}
	\end{cases}\label{eq:UWB_IMU_Normal}
\end{equation}
To this end, the true navigation kinematics of a rigid-body traveling
with 6 DoF are as follows \cite{hashim2021_COMP_ENG_PRAC,fornasier2022equivariant,hashim2021ACC}:
\begin{equation}
	\begin{cases}
		\dot{R} & =R\left[\Omega\right]_{\times}\\
		\dot{P} & =V\\
		\dot{V} & =Ra+\overrightarrow{\mathtt{g}}
	\end{cases},\hspace{0.5em}\underbrace{\dot{X}=XU-\mathcal{\mathcal{G}}X}_{\text{Compact form}}\label{eq:UWB_NAV_dot}
\end{equation}
where $R\in\mathbb{SO}\left(3\right)$ stands for the true orientation,
$P\in\mathbb{R}^{3}$ describe the true position, $V\in\mathbb{R}^{3}$
expresses the true linear velocity, $\Omega\in\mathbb{R}^{3}$ denotes
the true angular velocity, and $a\in\mathbb{R}^{3}$ stands for the
acceleration for all $R,\Omega,a\in\{\mathcal{B}\}$ and $P,V\in\{\mathcal{I}\}$.
The right part of \eqref{eq:UWB_NAV_dot} constitutes the compact
form of the navigation kinematics where $X=\Psi(R,P,V)\in\mathbb{SE}_{2}\left(3\right)$
(see the map in \eqref{eq:NAV_u}), $U=u([\Omega\text{\ensuremath{]_{\times}}},0_{3\times1},a,1)\in\mathcal{U}_{m}$,
and $\mathcal{\mathcal{G}}=u(0_{3\times3},0_{3\times1},-\overrightarrow{\mathtt{g}},1)\in\mathcal{U}_{m}$
(see the map in \eqref{eq:NAV_u}). For more information visit \cite{hashim2021_COMP_ENG_PRAC}.
\begin{lem}
	\label{Lemm:UWB_Lemma2}\cite{hashim2019SO3Wiley} Define $R\in\mathbb{SO}\left(3\right)$,
	$M_{r}=M_{r}^{\top}\in\mathbb{R}^{3\times3}$, and consider $\overline{M_{r}}={\rm Tr}\{M_{r}\}\mathbf{I}_{3}-M_{r}$
	where $\overline{\lambda}_{\overline{M_{r}}}$ and $\underline{\lambda}_{\overline{M_{r}}}$
	stands for the minimum and the maximum eigenvalues of $\overline{M_{r}}$,
	respectively. Define $||M_{r}R||_{{\rm I}}=\frac{1}{4}{\rm Tr}\{M_{r}(\mathbf{I}_{3}-R)\}$.
	As such, one obtains:
	\begin{align}
		||\mathbf{vex}(\boldsymbol{\mathcal{P}}_{a}(M_{r}R))||^{2} & \leq2\overline{\lambda}_{\overline{M_{r}}}||M_{r}R||_{{\rm I}}\label{eq:UWB_lemm2_1}\\
		||\mathbf{vex}(\boldsymbol{\mathcal{P}}_{a}(M_{r}R))||^{2} & \geq\frac{\underline{\lambda}_{\overline{M_{r}}}}{2}||M_{r}R||_{{\rm I}}(1+{\rm Tr}\{R\})\label{eq:UWB_lemm2_2}
	\end{align}
\end{lem}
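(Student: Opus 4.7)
The plan is to parametrize $R\in\mathbb{SO}(3)$ via Rodrigues' formula, $R=\mathbf{I}_{3}+\sin\theta\,[u]_{\times}+(1-\cos\theta)[u]_{\times}^{2}$, where $u$ is the unit axis and $\theta$ the rotation angle, so that $\|u\|=1$ and $[u]_{\times}^{2}=uu^{\top}-\mathbf{I}_{3}$. Substituting into $M_{r}R$ and peeling off the antisymmetric part should reduce both sides of the desired inequalities to quantities depending only on $u$, $\theta$, and the spectrum of $M_r$.

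First I would establish two identities that collapse the antisymmetric part nicely. (a) For symmetric $M_{r}$ and any $u\in\mathbb{R}^{3}$, direct row--column computation gives $M_{r}[u]_{\times}+[u]_{\times}M_{r}=[\overline{M_{r}}\,u]_{\times}$. (b) Since $[u]_{\times}^{2}$ is symmetric, $M_{r}[u]_{\times}^{2}-[u]_{\times}^{2}M_{r}=M_{r}uu^{\top}-uu^{\top}M_{r}$, and the rank-one wedge identity $ab^{\top}-ba^{\top}=[b\times a]_{\times}$ (with $a=M_{r}u$, $b=u$) reduces this to $[u\times(M_{r}u)]_{\times}$. Combining (a) and (b) with the Rodrigues expansion yields
\[
\mathbf{vex}(\boldsymbol{\mathcal{P}}_{a}(M_{r}R))=\tfrac{1}{2}\!\left[\sin\theta\,\overline{M_{r}}u+(1-\cos\theta)(u\times M_{r}u)\right].
\]
In parallel, I would compute $\|M_{r}R\|_{{\rm I}}=\tfrac{1}{4}\mathrm{Tr}\{M_{r}(\mathbf{I}_{3}-R)\}$ using $\mathrm{Tr}\{M_{r}[u]_{\times}\}=0$ (symmetric $\times$ antisymmetric) and $\mathrm{Tr}\{M_{r}[u]_{\times}^{2}\}=u^{\top}M_{r}u-\mathrm{Tr}\{M_{r}\}$, obtaining $\|M_{r}R\|_{{\rm I}}=\tfrac{1-\cos\theta}{4}\,u^{\top}\overline{M_{r}}u$.

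Next I would square the expression for $\mathbf{vex}(\boldsymbol{\mathcal{P}}_{a}(M_{r}R))$. The cross term is $(\overline{M_{r}}u)^{\top}(u\times M_{r}u)$; since $\overline{M_{r}}u=\mathrm{Tr}\{M_{r}\}u-M_{r}u$ is a linear combination of $u$ and $M_{r}u$, both orthogonal to $u\times M_{r}u$, the cross term vanishes. Using $\sin^{2}\theta=(1-\cos\theta)(1+\cos\theta)$, this leaves
\[
\|\mathbf{vex}(\boldsymbol{\mathcal{P}}_{a}(M_{r}R))\|^{2}=\tfrac{1-\cos\theta}{4}\!\left[(1+\cos\theta)\|\overline{M_{r}}u\|^{2}+(1-\cos\theta)\|u\times M_{r}u\|^{2}\right].
\]

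Finally I would derive the two bounds. For the upper bound \eqref{eq:UWB_lemm2_1}, the Rayleigh inequality gives $\|\overline{M_{r}}u\|^{2}\le\overline{\lambda}_{\overline{M_{r}}}\,u^{\top}\overline{M_{r}}u$; and the key observation $u\times M_{r}u=u\times(M_{r}u-(u^{\top}M_{r}u)u)=-(\overline{M_{r}}u)_{\perp u}$ (the projection of $-\overline{M_{r}}u$ onto $u^{\perp}$) yields $\|u\times M_{r}u\|^{2}\le\|\overline{M_{r}}u\|^{2}$. Summing with $(1+\cos\theta)+(1-\cos\theta)=2$ produces the factor $2\overline{\lambda}_{\overline{M_{r}}}\|M_{r}R\|_{{\rm I}}$. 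For the lower bound \eqref{eq:UWB_lemm2_2}, I would drop the nonnegative $\|u\times M_{r}u\|^{2}$ term, apply $\|\overline{M_{r}}u\|^{2}\ge\underline{\lambda}_{\overline{M_{r}}}\,u^{\top}\overline{M_{r}}u$, and recognise $1+\mathrm{Tr}\{R\}=2(1+\cos\theta)$ to recover the claim. The main technical obstacle is step two—establishing the two commutator identities and verifying that the cross term really vanishes—after which the bounds are simple Rayleigh estimates.
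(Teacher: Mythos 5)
The paper does not actually prove this lemma---it is imported with a citation to an earlier work---so there is no in-text argument to compare against; your derivation is, however, essentially the standard angle--axis proof used in that source, and it checks out. The two commutator identities are correct ($M_{r}[u]_{\times}+[u]_{\times}M_{r}=[\overline{M_{r}}u]_{\times}$ for symmetric $M_{r}$, and $M_{r}uu^{\top}-uu^{\top}M_{r}=[u\times M_{r}u]_{\times}$), the cross term in the squared norm vanishes for exactly the reason you give, and $||M_{r}R||_{{\rm I}}=\frac{1-\cos\theta}{4}u^{\top}\overline{M_{r}}u$ together with $1+{\rm Tr}\{R\}=2(1+\cos\theta)$ closes both bounds. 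Two small points. First, your Rayleigh steps $||\overline{M_{r}}u||^{2}\leq\overline{\lambda}_{\overline{M_{r}}}\,u^{\top}\overline{M_{r}}u$ and $||\overline{M_{r}}u||^{2}\geq\underline{\lambda}_{\overline{M_{r}}}\,u^{\top}\overline{M_{r}}u$ are valid only when $\overline{M_{r}}$ is positive semidefinite (for a symmetric $A$ with a negative eigenvalue, $u^{\top}A^{2}u\leq\overline{\lambda}_{A}\,u^{\top}Au$ can fail); the lemma as printed assumes only $M_{r}=M_{r}^{\top}$, but in the paper's actual use $M_{r}=\sum_{i}s_{i}r_{i}r_{i}^{\top}$ with $s_{i}>0$, so $\overline{M_{r}}\succeq0$ and your argument applies---you should state that hypothesis explicitly (note also that the lemma's wording swaps ``minimum'' and ``maximum''; your usage of $\overline{\lambda}$ as the maximum is the intended one). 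Second, $u\times M_{r}u$ is not literally the projection of $-\overline{M_{r}}u$ onto $u^{\perp}$ but a $90^{\circ}$ rotation of that projection about $u$; the norms coincide, which is all you use, so the estimate $||u\times M_{r}u||\leq||\overline{M_{r}}u||$ stands. With the positivity hypothesis made explicit, the proof is complete and self-contained.
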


\section{Deterministic Navigation Observer \label{sec:UWB_Filter}}

Let us define the estimates of the rigid-body's orientation, position,
and linear velocity as $\hat{R}\in\mathbb{SO}\left(3\right)$, $\hat{P}\in\mathbb{R}^{3}$,
and $\hat{V}\in\mathbb{R}^{3}$, respectively. The aim of this section
is to propose a nonlinear deterministic navigation observer reliant
on UWB-IMU fusion, that drives $\hat{R}\rightarrow R$, $\hat{P}\rightarrow P,$and
$\hat{V}\rightarrow V$. Fig. \ref{fig:TDOA1} presents a conceptual
illustration of the navigation problem, UWB-IMU fusion, and the estimation
objective.

\begin{figure}[h]
	\centering{}\centering\includegraphics[scale=0.45]{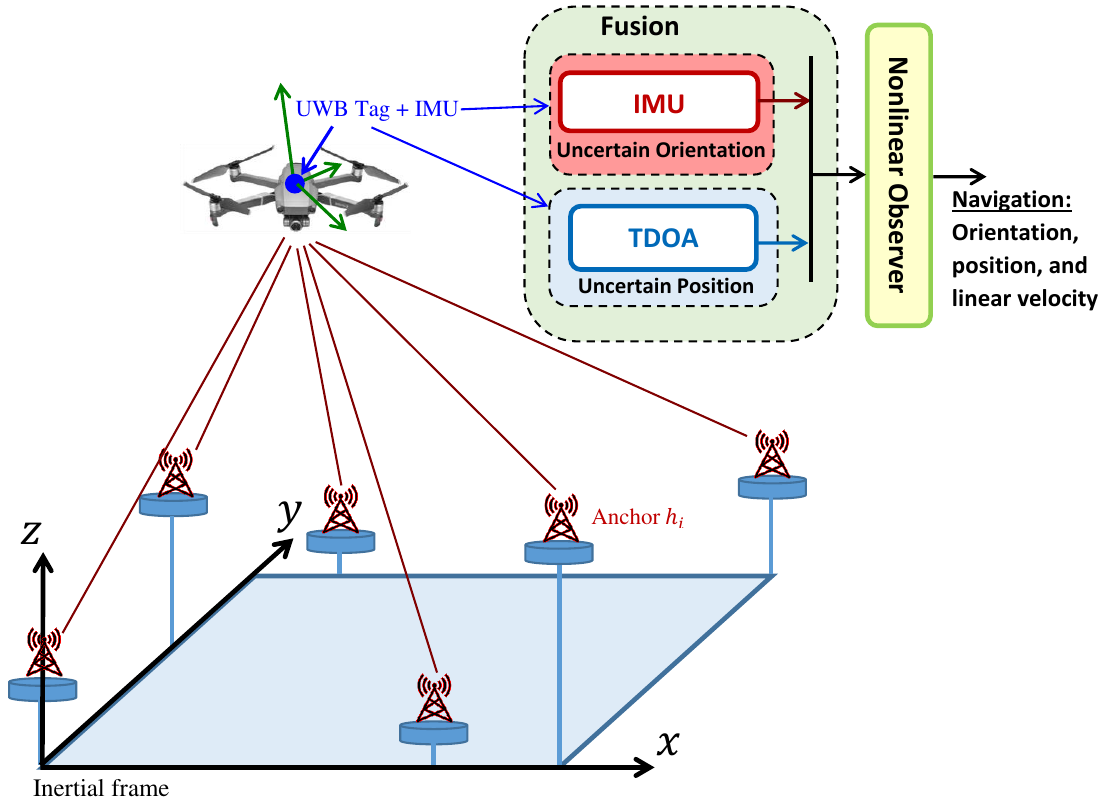}\caption{\label{fig:TDOA1} UWB-IMU fusion and navigation estimation problem.}
\end{figure}
Denote the estimation errors in attitude, position, and linear velocity
as $\tilde{R}$ , $\tilde{P}$, and $\tilde{V}$, respectively, and
express them as follows:
\begin{equation}
	\begin{cases}
		\tilde{R} & =R\hat{R}^{\top}\\
		\tilde{P} & =P-\hat{P}\\
		\tilde{V} & =V-\hat{V}
	\end{cases}\label{eq:UWB_NAV_error}
\end{equation}
Denote bias estimates of angular velocity and accelerometer as $\hat{b}_{\Omega}$
and $\hat{b}_{a}$, respectively, and express the bias estimation
error as follows:
\begin{equation}
	\begin{cases}
		\tilde{b}_{\Omega} & =b_{\Omega}-\hat{b}_{\Omega}\in\mathbb{R}^{3}\\
		\tilde{b}_{a} & =b_{a}-\hat{b}_{a}\in\mathbb{R}^{3}
	\end{cases}\label{eq:UWB_NAV_s_error}
\end{equation}
Recalling \eqref{eq:UWB_IMU_Normal} and define 
\begin{align}
	M_{r} & =\sum_{i=1}^{3}s_{i}r_{i}r_{i}^{\top},\hspace{1em}M_{B}=\sum_{i=1}^{3}s_{i}v_{i}v_{i}^{\top}\label{eq:UWB_Mr_Mv}
\end{align}
with $s_{i}$ standing for the $i$th sensor confidence level and
$\sum_{i=1}^{3}s_{i}=3$. Let us define
\begin{equation}
	\hat{v}_{i}=\hat{R}^{\top}r_{i},\hspace{1em}\forall i=1,2,3\label{eq:UWB_v_est}
\end{equation}
Hence, one obtains
\begin{align}
	\mathbf{vex}(\boldsymbol{\mathcal{P}}_{a}(M_{r}\tilde{R})) & =\frac{1}{2}\mathbf{vex}(M_{r}\tilde{R}-\tilde{R}^{\top}M_{r})\nonumber \\
	& =\frac{1}{2}\mathbf{vex}\left(\sum_{i=1}^{3}s_{i}r_{i}v_{i}^{\top}\hat{R}^{\top}-\sum_{i=1}^{3}s_{i}\hat{R}v_{i}r_{i}^{\top}\right)\nonumber \\
	& =\frac{1}{2}\sum_{i=1}^{3}\hat{R}s_{i}(v_{i}\times\hat{v}_{i})\label{eq:UWB_VEX}
\end{align}
Note that $[v_{i}\times\hat{v}_{i}]_{\times}=\hat{v}_{i}v_{i}^{\top}-v_{i}\hat{v}_{i}^{\top}$.

\paragraph*{Nonlinear observer}Let us introduce the following correction
mechanism:
\begin{equation}
	\begin{cases}
		\overline{P}_{y} & =\left[\begin{array}{c}
			P_{y}\\
			||P-h_{1}||
		\end{array}\right]=(A^{\top}A)^{-1}A^{\top}B\\
		\dot{\hat{b}}_{\Omega} & =-\frac{\gamma_{\Omega}}{2}\sum_{i=1}^{n}(v_{i}\times\hat{v}_{i})\\
		w_{\Omega} & =-\frac{k_{\Omega}}{2}\sum_{i=1}^{n}\hat{R}(v_{i}\times\hat{v}_{i})\\
		w_{V} & =-k_{v}(P_{y}-\hat{P})-[w_{\Omega}]_{\times}\hat{P}\\
		w_{a} & =-k_{a}(P_{y}-\hat{P})-[w_{\Omega}]_{\times}\hat{V}
	\end{cases}\label{eq:UWB_Filter1_Correc}
\end{equation}
where $\gamma_{\sigma}$, $k_{\Omega}$, $k_{v}$, and $k_{a}$, are
positive constants, $v_{i}$ is defined in \eqref{eq:UWB_IMU_Normal},
$v_{i}$ is described in \eqref{eq:UWB_v_est}, and $P_{y}$ is expressed
in \eqref{eq:UWB_Pbar}. Consider the following nonlinear deterministic
navigation observer design:
\begin{equation}
	\begin{cases}
		\dot{\hat{R}} & =\hat{R}\left[\Omega_{m}-\hat{b}_{\Omega}\right]_{\times}-\left[w_{\Omega}\right]_{\times}\hat{R}\\
		\dot{\hat{P}} & =\hat{V}-\left[w_{\Omega}\right]_{\times}\hat{P}-w_{V}\\
		\dot{\hat{V}} & =\hat{R}a_{m}+\overrightarrow{\mathtt{g}}-\left[w_{\Omega}\right]_{\times}\hat{V}-w_{a}
	\end{cases},\hspace{1em}\underbrace{\dot{\hat{X}}=\hat{X}U_{m}-W\hat{X}}_{\text{Compact form}}\label{eq:UWB_Filter1_Detailed}
\end{equation}
where $w_{\Omega}$, $w_{V}$, and $w_{a}$ are defined in \eqref{eq:UWB_Filter1_Correc}.
The right part of \eqref{eq:UWB_Filter1_Detailed} comprises the compact
form of the estimator kinematics where $\hat{X}=\Psi(\hat{R},\hat{P},\hat{V})\in\mathbb{SE}_{2}\left(3\right)$
describes the homogeneous navigation estimate of $X$ (see the map
in \eqref{eq:NAV_X}), $U_{m}=u([\Omega_{m}-\hat{b}_{\Omega}\text{\ensuremath{]_{\times}}},0_{3\times1},a_{m},1)\in\mathcal{U}_{\mathcal{M}}$,
and $W=u([w_{\Omega}\text{\ensuremath{]_{\times}}},w_{V},w_{a},1)\in\mathcal{U}_{\mathcal{M}}$
(see the map in \eqref{eq:NAV_u}).
\begin{thm}
	\label{thm:Theorem1} Consider the nonlinear navigation system in
	\eqref{eq:UWB_NAV_dot}. Assume availability of 3 non-collinear measurements/observations
	and fulfillment of Assumption \ref{Assumption:assum_TOA}. Let the
	nonlinear deterministic navigation observer in \eqref{eq:UWB_Filter1_Detailed}
	be coupled with the direct measurements in \eqref{eq:UWB_Pbar} and
	the correction terms in \eqref{eq:UWB_Filter1_Correc} such that $\Omega_{m}=\Omega+b_{\Omega}$.
	Hence, all the closed-loop signals are exponentially stable from almost
	any initial condition.
\end{thm}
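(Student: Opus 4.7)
The plan is to split the closed-loop error system into the coupled attitude--gyro-bias subsystem and the translational position--velocity subsystem, analyze each via a Lyapunov function, and then close the cascade.

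First I would derive the error dynamics. Differentiating $\tilde{R}=R\hat{R}^{\top}$ along \eqref{eq:UWB_NAV_dot} and \eqref{eq:UWB_Filter1_Detailed}, using $\Omega_{m}-\hat{b}_{\Omega}=\Omega-\tilde{b}_{\Omega}$ and the identity $R[y]_{\times}R^{\top}=[Ry]_{\times}$, yields $\dot{\tilde{R}}=-[R\tilde{b}_{\Omega}]_{\times}\tilde{R}+\tilde{R}[w_{\Omega}]_{\times}$. Using $P_{y}=P$, which follows from Assumption \ref{Assumption:assum_TOA} together with the least-squares solution \eqref{eq:UWB_Pbar}, and substituting $w_{V}$ and $w_{a}$ from \eqref{eq:UWB_Filter1_Correc} cancels the $[w_{\Omega}]_{\times}$ cross-terms in $\dot{\hat{P}}$ and $\dot{\hat{V}}$, leaving $\dot{\tilde{P}}=\tilde{V}-k_{v}\tilde{P}$ and $\dot{\tilde{V}}=-k_{a}\tilde{P}+(R-\hat{R})a-\hat{R}b_{a}$. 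The adaptive law in \eqref{eq:UWB_Filter1_Correc} combined with the identity \eqref{eq:UWB_VEX} gives $\dot{\tilde{b}}_{\Omega}=\gamma_{\Omega}\hat{R}^{\top}\mathbf{vex}(\boldsymbol{\mathcal{P}}_{a}(M_{r}\tilde{R}))$.

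Next I would stabilize the attitude--bias loop with $\mathcal{L}_{1}=||M_{r}\tilde{R}||_{\mathrm{I}}+\frac{1}{2\gamma_{\Omega}}||\tilde{b}_{\Omega}||^{2}$. Differentiating $||M_{r}\tilde{R}||_{\mathrm{I}}=\frac{1}{4}\mathrm{Tr}\{M_{r}-M_{r}\tilde{R}\}$ along the closed-loop flow and invoking \eqref{eq:UWB_VEX} shows that the $\tilde{b}_{\Omega}$ cross-term is precisely cancelled by the adaptive law, leaving $\dot{\mathcal{L}}_{1}=-k_{\Omega}||\mathbf{vex}(\boldsymbol{\mathcal{P}}_{a}(M_{r}\tilde{R}))||^{2}$. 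The lower bound \eqref{eq:UWB_lemm2_2} of Lemma \ref{Lemm:UWB_Lemma2} then yields $\dot{\mathcal{L}}_{1}\leq -\frac{k_{\Omega}\underline{\lambda}_{\overline{M_{r}}}}{2}||M_{r}\tilde{R}||_{\mathrm{I}}(1+\mathrm{Tr}\{\tilde{R}\})$, so $||M_{r}\tilde{R}||_{\mathrm{I}}$ decays exponentially on every sublevel set on which $1+\mathrm{Tr}\{\tilde{R}\}$ stays bounded away from zero. The excluded locus $\{\tilde{R}:1+\mathrm{Tr}\{\tilde{R}\}=0\}$ of $\pi$-rotations has measure zero on $\mathbb{SO}(3)$, which is exactly the ``almost any initial condition'' hedge.

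For the translational loop, the nominal $(\tilde{P},\tilde{V})$-dynamics are linear with characteristic polynomial $s^{2}+k_{v}s+k_{a}$, Hurwitz since $k_{v},k_{a}>0$. I would take a cross-term-augmented quadratic $\mathcal{L}_{2}=\alpha_{1}||\tilde{P}||^{2}+\alpha_{2}\tilde{P}^{\top}\tilde{V}+\alpha_{3}||\tilde{V}||^{2}$ with $\alpha_{i}$ chosen so that $\mathcal{L}_{2}$ is positive definite and its derivative along the unforced flow is strictly negative, obtaining $\dot{\mathcal{L}}_{2}\leq -c\mathcal{L}_{2}+\beta||d||^{2}$ with $d=(R-\hat{R})a-\hat{R}b_{a}$. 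The identity $||R-\hat{R}||_{F}^{2}=2\mathrm{Tr}\{\mathbf{I}_{3}-\tilde{R}\}=8||\tilde{R}||_{\mathrm{I}}$, combined with bounded rigid-body acceleration, shows $||d||^{2}$ inherits exponential decay from the attitude loop. Forming $\mathcal{L}=\kappa_{1}\mathcal{L}_{1}+\mathcal{L}_{2}$ with $\kappa_{1}$ sufficiently large and applying Young's inequality to the cross-coupling then gives $\dot{\mathcal{L}}\leq -\lambda\mathcal{L}$, establishing exponential convergence of all closed-loop signals. The hard part is precisely this cross-coupling: the term $(R-\hat{R})a$ in $\dot{\tilde{V}}$ obstructs a monolithic Lyapunov argument and must be dominated through the cascade bound above, which itself hinges on the a priori exponential contraction of $||M_{r}\tilde{R}||_{\mathrm{I}}$ supplied by Lemma \ref{Lemm:UWB_Lemma2}. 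A secondary subtlety is the accelerometer bias $b_{a}$: since \eqref{eq:UWB_Filter1_Correc} contains no adaptive law for it, a rigorous exponential bound on $\tilde{V}$ requires either invoking the low-frequency approximation mentioned after \eqref{eq:UWB_am} or assuming $b_{a}$ is pre-compensated, so that the residual $\hat{R}b_{a}$ contributes only a vanishing perturbation compatible with the Lyapunov bound.
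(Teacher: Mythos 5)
Your overall architecture (attitude--gyro-bias loop, translational loop, cascade closure with a cross-term-augmented quadratic for $(\tilde{P},\tilde{V})$) matches the paper's, and your error dynamics, your use of Lemma \ref{Lemm:UWB_Lemma2}, and your identification of the $\pi$-rotation locus as the source of the ``almost any initial condition'' caveat are all sound. The genuine gap is in the attitude--bias subsystem. With $\mathcal{L}_{1}=||M_{r}\tilde{R}||_{{\rm I}}+\frac{1}{2\gamma_{\Omega}}||\tilde{b}_{\Omega}||^{2}$ you correctly obtain $\dot{\mathcal{L}}_{1}=-k_{\Omega}||\mathbf{vex}(\boldsymbol{\mathcal{P}}_{a}(M_{r}\tilde{R}))||^{2}$, but this is only negative \emph{semi}definite: it contains no $-||\tilde{b}_{\Omega}||^{2}$ term. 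Your next step --- invoking \eqref{eq:UWB_lemm2_2} to get $\dot{\mathcal{L}}_{1}\leq-c\,||M_{r}\tilde{R}||_{{\rm I}}(1+{\rm Tr}\{\tilde{R}\})$ and concluding that $||M_{r}\tilde{R}||_{{\rm I}}$ decays exponentially --- is a non sequitur, because $\mathcal{L}_{1}$ also contains $\frac{1}{2\gamma_{\Omega}}||\tilde{b}_{\Omega}||^{2}$, which does not appear on the right-hand side; an inequality of the form $\dot{\mathcal{L}}_{1}\leq-c\,||M_{r}\tilde{R}||_{{\rm I}}$ does not imply $\dot{\mathcal{L}}_{1}\leq-\lambda\mathcal{L}_{1}$. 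This is the standard adaptive-control situation in which LaSalle/Barbalat yields asymptotic convergence but no exponential rate. Since your cascade bound for the translational loop hinges on \emph{exponential} decay of $||\tilde{R}||_{{\rm I}}$ to dominate $(R-\hat{R})a$, and the theorem asserts exponential stability of \emph{all} closed-loop signals including $\tilde{b}_{\Omega}$, this step cannot stand as written.

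The missing idea --- which you in fact deploy for the translational loop via the $-\delta\tilde{P}^{\top}\tilde{V}$ term, but omit where it is indispensable --- is to augment the attitude--bias Lyapunov function with a scaled cross term $\mathbf{vex}(\boldsymbol{\mathcal{P}}_{a}(M_{r}\tilde{R}))^{\top}\hat{R}\tilde{b}_{\Omega}$. Differentiating that cross term injects, through $\dot{\hat{b}}_{\Omega}$, a strictly negative $-||\tilde{b}_{\Omega}||^{2}$ contribution at the price of sign-indefinite couplings that are then dominated by taking $k_{\Omega}$ sufficiently large; this is precisely how the paper arrives at $\dot{\mathcal{L}}_{R}\leq-e_{R}^{\top}Qe_{R}$ with $e_{R}=[\sqrt{||M_{r}\tilde{R}||_{{\rm I}}},\,||\hat{R}\tilde{b}_{\Omega}||]^{\top}$ and $Q>0$, hence genuine exponential decay of both the attitude and the gyro-bias errors. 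With that repaired, your $(\tilde{P},\tilde{V})$ argument and the final domination of the $(R-\hat{R})a$ coupling go through essentially as you describe and as the paper does. Your closing caveat about the uncompensated accelerometer bias $b_{a}$ is fair; the paper defers it to a separate subsection with an adaptive law $\dot{\hat{b}}_{a}$ and a further cross term $\delta_{a}\tilde{b}_{a}^{\top}\hat{R}^{\top}\tilde{V}$ in $\mathcal{L}_{PV}$.
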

\begin{proof}In view of \eqref{eq:UWB_NAV_dot}, \eqref{eq:UWB_NAV_error},
	and \eqref{eq:UWB_Filter1_Detailed}, one shows
	\begin{align}
		\frac{d}{dt}||M_{r}\tilde{R}||_{{\rm I}}= & \frac{d}{dt}\frac{1}{4}{\rm Tr}\{M_{r}(\mathbf{I}_{3}-\tilde{R})\}\nonumber \\
		= & \frac{1}{4}{\rm Tr}\{M_{r}\tilde{R}[\hat{R}\tilde{b}_{\Omega}-w_{\Omega}]_{\times}\}\nonumber \\
		= & -\frac{1}{2}\mathbf{vex}(\boldsymbol{\mathcal{P}}_{a}(M_{r}\tilde{R}))^{\top}(\hat{R}\tilde{b}_{\Omega}-w_{\Omega})\label{eq:UWB_Er_dot}
	\end{align}
	where $M_{r}$ is a constant matrix and 
	\begin{align*}
		{\rm Tr}\{M_{r}\tilde{R}[w_{\Omega}]_{\times}\} & ={\rm Tr}\{\boldsymbol{\mathcal{P}}_{a}(M_{r}\tilde{R})[w_{\Omega}]_{\times}\}\\
		& =-\frac{1}{2}\mathbf{vex}(\boldsymbol{\mathcal{P}}_{a}(M_{r}\tilde{R}))^{\top}w_{\Omega}
	\end{align*}
	From \eqref{eq:UWB_NAV_dot}, \eqref{eq:UWB_NAV_error}, and \eqref{eq:UWB_Filter1_Detailed},
	one finds
	\begin{equation}
		\begin{cases}
			\dot{\tilde{P}} & =\tilde{V}+[w_{\Omega}]_{\times}\hat{P}+w_{V}\\
			d\tilde{V} & =(\tilde{R}-\mathbf{I}_{3})\hat{R}a+[w_{\Omega}]_{\times}\hat{V}+w_{a}
		\end{cases}\label{eq:UWB_Filter1_Error_dot}
	\end{equation}
	Consider the Lyapunov function candidate $\mathcal{L}_{T}=\mathcal{L}_{T}(E_{r},\tilde{P},\tilde{V},\tilde{b}_{\Omega})$:
	\begin{equation}
		\mathcal{L}_{T}=\mathcal{L}_{R}+\mathcal{L}_{PV}\label{eq:UWB_LyapT}
	\end{equation}
	Define the following Lyapunov function candidate $L_{1}:\mathbb{SO}\left(3\right)\times\mathbb{R}^{3}\rightarrow\mathbb{R}_{+}$:
	\begin{equation}
		L_{1}=2||M_{r}\tilde{R}||_{{\rm I}}+\frac{1}{2\gamma_{\Omega}}||\tilde{b}_{\Omega}||^{2}\label{eq:UWB_LyapR}
	\end{equation}
	From \eqref{eq:UWB_LyapR} and \eqref{eq:UWB_Filter1_Detailed}, one
	shows
	\begin{align}
		\dot{L}_{1} & =-\mathbf{vex}(\boldsymbol{\mathcal{P}}_{a}(M_{r}\tilde{R}))^{\top}(\hat{R}\tilde{b}_{\Omega}-w_{\Omega})-\frac{1}{\gamma_{\Omega}}\tilde{b}_{\Omega}^{\top}\dot{\hat{b}}_{\Omega}\nonumber \\
		& =-k_{\Omega}||\mathbf{vex}(\boldsymbol{\mathcal{P}}_{a}(M_{r}\tilde{R}))||^{2}\label{eq:UWB_LyapR1dot}
	\end{align}
	$||\mathbf{vex}(\boldsymbol{\mathcal{P}}_{a}(M_{r}\tilde{R}))||\rightarrow0_{3\times1}$
	shows that $w_{\Omega}\rightarrow0_{3\times1}$ and $\dot{\hat{b}}_{\Omega}\rightarrow0_{3\times1}$.
	Hence, $\dot{\tilde{R}}\rightarrow0_{3\times3}$ leading to $w_{\Omega}-\hat{R}\tilde{b}_{\Omega}\rightarrow0_{3\times1}$,
	and thereby $\tilde{b}_{\Omega}\rightarrow0_{3\times1}$. Since $\boldsymbol{\Upsilon}(M_{r}\dot{R})=\frac{1}{2}({\rm Tr}\{M_{r}R\}\mathbf{I}_{3}-R^{\top}M_{r})\Omega$
	\cite{hashim2019SO3Wiley}, define
	\[
	\mathcal{L}_{R}=2||M_{r}\tilde{R}||_{{\rm I}}+\frac{1}{2\gamma_{\Omega}}||\tilde{b}_{\Omega}||^{2}+\frac{\mathbf{vex}(\boldsymbol{\mathcal{P}}_{a}(M_{r}\tilde{R}))^{\top}\hat{R}\tilde{b}_{\Omega}}{2\overline{\gamma}_{\Omega}\overline{\lambda}_{\overline{M_{r}}}}
	\]
	Using \eqref{eq:UWB_lemm2_1} one finds
	\[
	e_{R}^{\top}\underbrace{\left[\begin{array}{cc}
			2 & -\frac{1}{\overline{\gamma}_{\Omega}}\\
			-\frac{1}{\overline{\gamma}_{\Omega}} & \frac{1}{2\gamma_{\Omega}}
		\end{array}\right]}_{Q_{1}}e_{R}\leq\mathcal{L}_{R}\leq e_{R}^{\top}\underbrace{\left[\begin{array}{cc}
			2 & \frac{1}{\overline{\gamma}_{\Omega}}\\
			\frac{\delta}{\overline{\gamma}_{\Omega}} & \frac{1}{2\gamma_{\Omega}}
		\end{array}\right]}_{Q_{2}}e_{R}
	\]
	where $e_{R}=[\sqrt{||M_{r}\tilde{R}||_{{\rm I}}},||\hat{R}\tilde{b}_{\Omega}||]^{\top}$.
	One shows $\frac{1}{2\overline{\gamma}_{\Omega}\overline{\lambda}_{\overline{M_{r}}}}\frac{d}{dt}(\mathbf{vex}(\boldsymbol{\mathcal{P}}_{a}(M_{r}\tilde{R}))^{\top}\hat{R}\tilde{b}_{\Omega})\leq-\frac{\sqrt{3}}{4\overline{\gamma}_{\Omega}}||\tilde{b}_{\Omega}||^{2}+c_{1}||\boldsymbol{\Upsilon}(M_{r}\tilde{R})||^{2}+c_{2}||\boldsymbol{\Upsilon}(M_{r}\tilde{R})||\,||\tilde{b}_{\Omega}||$
	where $c_{1}=\frac{\gamma_{\Omega}}{2\overline{\gamma}_{\Omega}\overline{\lambda}_{\overline{M_{r}}}}|$
	and $c_{2}=\frac{\sqrt{3}k_{\Omega}+c_{\Omega}}{2\overline{\gamma}_{\Omega}\overline{\lambda}_{\overline{M_{r}}}}$.
	Consequently,
	\begin{align}
		\dot{\mathcal{L}}_{R}\leq & -(k_{\Omega}-c_{1})||\boldsymbol{\Upsilon}(M_{r}\tilde{R})||^{2}-\frac{\sqrt{3}}{4\overline{\gamma}_{\Omega}}||\tilde{b}_{\Omega}||^{2}\nonumber \\
		& +c_{2}||\boldsymbol{\Upsilon}(M_{r}\tilde{R})||\,||\tilde{b}_{\Omega}||\label{eq:UWB_LyapR_1dot_2}
	\end{align}
	Recalling \eqref{eq:UWB_lemm2_2}, one finds
	\begin{align}
		\dot{\mathcal{L}}_{R}\leq & -e_{R}^{\top}\underbrace{\left[\begin{array}{cc}
				\frac{k_{\Omega}-c_{1}}{c_{R}} & -c_{2}\overline{\lambda}_{\overline{M_{r}}}\\
				-c_{2}\overline{\lambda}_{\overline{M_{r}}} & \frac{\sqrt{3}}{4\overline{\gamma}_{\Omega}}
			\end{array}\right]}_{Q_{2}}e_{R}\label{eq:UWB_LyapR_1dot_3}\\
		\mathcal{L}_{R}\leq & \mathcal{L}_{R}(0)\exp(-\underline{\lambda}_{Q_{3}}t/\overline{\lambda}_{Q_{2}})\label{eq:UWB_LyapRfinal}
	\end{align}
	where $c_{R}=\frac{\underline{\lambda}_{\overline{M_{r}}}}{2}(1+{\rm Tr}\{\tilde{R}(0)\})$
	and $Q_{2}$ is made positive by selecting $k_{\Omega}>\frac{4c_{R}\overline{\gamma}_{\Omega}c_{2}^{2}\overline{\lambda}_{\overline{M_{r}}}^{2}}{\sqrt{3}}+c_{1}$.
	Define the following real value function:
	\begin{align}
		\mathcal{L}_{PV}= & \frac{1}{2}||\tilde{P}||^{2}+\frac{1}{2k_{a}}||\tilde{V}||^{2}-\delta\tilde{P}^{\top}\tilde{V}\label{eq:UWB_LyapPV}
	\end{align}
	\[
	e_{PV}^{\top}\underbrace{\left[\begin{array}{cc}
			\frac{1}{2} & -\frac{\delta}{2}\\
			-\frac{\delta}{2} & \frac{1}{2k_{a}}
		\end{array}\right]}_{Q_{4}}e_{PV}\leq\mathcal{L}_{PV}\leq e_{PV}^{\top}\underbrace{\left[\begin{array}{cc}
			\frac{1}{2} & \frac{\delta}{2}\\
			\frac{\delta}{2} & \frac{1}{2k_{a}}
		\end{array}\right]}_{Q_{5}}e_{PV}
	\]
	where $e_{PV}=[||\tilde{P}||,||\tilde{V}||]^{\top}$. Hence, using
	\eqref{eq:UWB_Filter1_Error_dot} and \eqref{eq:UWB_LyapPV}, one
	obtains
	\begin{align}
		\dot{\mathcal{L}}_{PV}\leq & -(k_{v}-\delta k_{a})||\tilde{P}||^{2}-\delta||\tilde{V}||^{2}+\delta k_{v}||\tilde{V}||\,||\tilde{P}||\nonumber \\
		& +(\delta||\tilde{P}||+\frac{1}{k_{a}}||\tilde{V}||)||\mathbf{I}_{3}-\tilde{R}||_{F}\hat{R}a\label{eq:UWB_LyapPV_1dot_1}
	\end{align}
	where $c_{a}=\max\{\sup_{t\geq0}4\delta\overline{\lambda}_{M}a,\sup_{t\geq0}\frac{4\overline{\lambda}_{M}a}{k_{a}}\}$,
	$||{\rm Tr}\{M_{r}\tilde{R}\}\mathbf{I}_{3}-M_{r}\tilde{R}||_{F}\leq\sqrt{3}\overline{\lambda}_{\overline{M_{r}}}$.
	Thus, $\dot{\mathcal{L}}_{PV}$ in \eqref{eq:UWB_LyapPV_1dot_1} becomes
	\begin{align}
		\dot{\mathcal{L}}_{PV}\leq & -e_{PV}^{\top}\underbrace{\left[\begin{array}{cc}
				k_{v}-\delta k_{a} & -\frac{\delta k_{v}}{2}\\
				-\frac{\delta k_{v}}{2} & \delta
			\end{array}\right]}_{Q_{6}}e_{PV}\nonumber \\
		& +c_{a}(||\tilde{P}||+||\tilde{V}||)\sqrt{||M\tilde{R}||_{{\rm I}}}\label{eq:UWB_LyapPV_1dot_Final}
	\end{align}
	where $e_{PV}=[||\tilde{P}||^{2},||\tilde{V}||^{2}]^{\top}$. $Q_{6}$
	is made positive by selecting $\frac{4k_{v}}{k_{v}^{2}+4k_{a}}>\delta$.
	Let us define $\underline{\lambda}_{PV}=\underline{\lambda}(Q_{6})$,
	and $e_{T}=[||e_{R}||,||e_{PV}||]^{\top}$. From \eqref{eq:UWB_LyapRfinal}
	and \eqref{eq:UWB_LyapPV_1dot_Final}, one finds
	\begin{align}
		\dot{\mathcal{L}}_{T}\leq & -\underline{\lambda}_{Q_{3}}||e_{R}||^{2}-\underline{\lambda}_{Q_{6}}||e_{PV}||^{2}\nonumber \\
		& +c_{a}(||\tilde{P}||+||\tilde{V}||)\sqrt{||M\tilde{R}||_{{\rm I}}}\nonumber \\
		\leq & -e_{T}^{\top}\underbrace{\left[\begin{array}{cc}
				\underline{\lambda}_{Q_{3}} & -\frac{c_{a}}{2}\\
				-\frac{c_{a}}{2} & \underline{\lambda}_{Q_{6}}
			\end{array}\right]}_{Q_{T}}e_{T}\label{eq:UWB_LyapT_Final}
	\end{align}
	where $\eta_{\sigma}=(\frac{1}{4k_{d}}+\frac{k_{\sigma}}{2})||\sigma||^{2}$
	and $e_{T}=[||e_{R}||,||e_{PV}||]^{\top}$. Therefore, $Q_{T}$ is
	made positive by selecting $\underline{\lambda}_{Q_{3}}>\frac{c_{a}^{2}}{4\underline{\lambda}_{Q_{6}}}$.
	Thereby, $e_{T}$ is uniformly almost globally exponentially stable
	completing the proof.\end{proof}

\subsection{Accelerometer Compensation}

The proof of Theorem \ref{thm:Theorem1} can be extended to include
accelerometer compensation. Let us define $\dot{\hat{b}}_{a}$ and
modify $\dot{\hat{V}}$ as follows:
\begin{equation}
	\begin{cases}
		\dot{\hat{b}}_{a} & =-\gamma_{a}\hat{R}^{\top}(P_{y}-\hat{P})\\
		\dot{\hat{V}} & =\hat{R}(a_{m}-\hat{b}_{a})+\overrightarrow{\mathtt{g}}-\left[w_{\Omega}\right]_{\times}\hat{V}-w_{a}
	\end{cases}\label{eq:UWB_Update}
\end{equation}
where $\gamma_{a}>0$ is a positive gain. In this regard, $\mathcal{L}_{PV}$
is modified as follows: 
\[
\mathcal{L}_{PV}=\frac{1}{2}||\tilde{P}||^{2}+\frac{1}{2k_{a}}||\tilde{V}||^{2}-\delta\tilde{P}^{\top}\tilde{V}+\delta_{a}\tilde{b}_{a}^{\top}\hat{R}^{\top}\tilde{V}
\]
Analogously to the proof of Theorem \ref{thm:Theorem1}, one obtains
a result similar to \eqref{eq:UWB_LyapT_Final}.

\subsection{Implementation Steps in Discrete Form}

Let $\Delta t$ be a small sample time, and set $\hat{P}_{0|0},\hat{V}_{0|0},\hat{\sigma}_{0}\in\mathbb{R}^{3}$,
$\hat{R}_{0|0}\in\mathbb{SO}\left(3\right)$, and $k=0$. Algorithm
\ref{alg:Alg_Disc0} details the discrete implementation steps.
\begin{algorithm}[h]
	\caption{\label{alg:Alg_Disc0}Discrete navigation observer}
	
	\textbf{while }(1)\textbf{ do}
	\begin{enumerate}
		\item[{\footnotesize{}1:}] $\hat{X}_{k|k}=\left[\begin{array}{ccc}
			\hat{R}_{k|k} & \hat{P}_{k|k} & \hat{V}_{k|k}\\
			0_{1\times3} & 1 & 0\\
			0_{1\times3} & 0 & 1
		\end{array}\right]\in\mathbb{SE}_{2}\left(3\right)$ and \\
		$\hat{U}_{k}=\left[\begin{array}{ccc}
			[\Omega_{m}[k]\text{\ensuremath{]_{\times}}} & 0_{3\times1} & a_{m}[k]\\
			0_{1\times3} & 0 & 0\\
			0_{1\times3} & 1 & 0
		\end{array}\right]\in\mathcal{U}_{\mathcal{M}}$
		\item[{\footnotesize{}2:}] $\hat{X}_{k+1|k}=\hat{X}_{k|k}\exp(\hat{U}_{k}\Delta t)$
		\item[{\footnotesize{}3:}] $\begin{cases}
			v_{1}=\frac{a_{m}}{||a_{m}||}, & r_{1}=\frac{-\overrightarrow{\mathtt{g}}}{||-\overrightarrow{\mathtt{g}}||}\\
			v_{2}=\frac{m_{m}}{||m_{m}||}, & r_{2}=\frac{m_{r}}{||m_{r}||}\\
			v_{3}=\frac{v_{1}\times v_{2}}{||v_{1}\times v_{2}||}, & r_{3}=\frac{r_{1}\times r_{2}}{||r_{1}\times r_{2}||}
		\end{cases}$
		\item[{\footnotesize{}4:}] $\begin{cases}
			\overline{P}_{y} & =\left[\begin{array}{c}
				P_{y}\\
				||P-h_{1}||
			\end{array}\right]=(A^{\top}A)^{-1}A^{\top}B\\
			\hat{b}_{\Omega|k} & =\hat{b}_{\Omega|k-1}-\frac{\Delta t\gamma_{\Omega}}{2}\sum_{i=1}^{n}(v_{i}\times\hat{v}_{i})\\
			\hat{b}_{a|k} & =\hat{b}_{a|k-1}-\Delta t\gamma_{a}\hat{R}^{\top}(P_{y}-\hat{P})\\
			w_{\Omega} & =-\frac{k_{\Omega}}{2}\sum_{i=1}^{n}\hat{R}(v_{i}\times\hat{v}_{i})\\
			w_{V} & =-k_{v}(P_{y}-\hat{P})-[w_{\Omega}]_{\times}\hat{P}\\
			w_{a} & =-\overrightarrow{\mathtt{g}}-k_{a}(P_{y}-\hat{P})-[w_{\Omega}]_{\times}\hat{V}
		\end{cases}$
		\item[{\footnotesize{}5:}] $W_{k}=\left[\begin{array}{ccc}
			[w_{\Omega}[k]\text{\ensuremath{]_{\times}}} & w_{V}[k] & w_{a}[k]\\
			0_{1\times3} & 0 & 0\\
			0_{1\times3} & 1 & 0
		\end{array}\right]$
		\item[{\footnotesize{}6:}] $\hat{X}_{k+1|k+1}=\exp(-W_{k}\Delta t)\hat{X}_{k+1|k}$ and $k=k+1$
	\end{enumerate}
	\textbf{end while}
\end{algorithm}

\section{Numerical Results \label{sec:UWB_Simulations}}

In this section, effectiveness of the proposed navigation nonlinear
observer for inertial navigation using UWB and IMU is presented. The
validation utilizes a publicly available real world dataset collected
during a drone flight and published by Zhao et al., 2022 \cite{zhao2022uwbData}.
The drone was equipped with one UWB tag and a 6-axis IMU and flew
within range of 8 fixed anchors satisfying Assumption \ref{Assumption:assum_TOA}.
The dataset contains measurements of $d_{j,i}$ range, gyroscope,
and magnetometer, and fixed UWB anchor positions. The dataset also
includes the ground truth: true drone orientation (described in unit
quaternion) and position in meters. Since linear velocity is not provided,
a classical maximum likelihood (ML) approach is utilized to extract
the true linear velocity (to create a benchmark for the estimates)
\cite{maybeck1982stochastic}. Capturing the large initial error,
the experiment commenced at the true drone position $P(0)=[1.237,0.124,1.534]^{\top}$
and linear velocity $V(0)=[-0.0473,0.1286,-1.2789]^{\top}$, whereas
the estimated position and linear velocity were set to $\hat{P}(0)=[-3,-1,0]^{\top}$
and $\hat{V}(0)=[0,0,0]^{\top}$, respectively. To accommodate for
the fact that UWB tag was not placed at the drone's center, the range
distance was modified using the translation $v_{c}=[-0.012,0.001,0.091]^{\top}m$
\cite{zhao2022uwbData} as follows:
\begin{equation}
	d_{i,j}=||Rv_{c}+P-h_{j}||-||Rv_{c}+P-h_{i}||\label{eq:UWB_dji_simu}
\end{equation}
A magnetometer has been added in simulation where we defined $m_{r}=[-1.7,0,1.2]^{\top}$
and calculated $m_{m}=R^{\top}m_{r}+n_{m}$ with $n_{m}=\mathcal{N}\left(0,0.2\right)$
being a normally distributed random noise vector (zero mean and $0.2$
standard deviation). The design parameters were selected as follows:
$k_{w}=3$, $k_{v}=2$,\textbf{ }$k_{a}=70$, $\gamma_{\Omega}=0.1$,
and $\gamma_{a}=2$. Also, the initial bias estimates were set as
$\hat{b}_{\Omega}(0)=\hat{b}_{a}(0)=[0,0,0]^{\top}$.

This Section uses Trial Const1 in \cite{zhao2022uwbData}. Fig. \ref{fig:UWB_Py}
presents the true vehicle's position $P$ plotted as a red solid line,
the estimated vehicle's position $\hat{P}$ marked as a blue dash
line, and the reconstructed position $P_{y}$ (obtained from the TDOA
range measurements $d_{i,j}$) shown in orange color. Fig. \ref{fig:UWB_Py}
makes apparent the high level of uncertainties present in the reconstructed
position $P_{y}$ and the robust capability of the proposed observer
to reject the noise and provide good estimates. In Fig. \ref{fig:UWB_Err},
strong tracking performance of errors in orientation $||\tilde{R}||_{{\rm I}}=\frac{1}{4}{\rm Tr}\{\mathbf{I}_{3}-\hat{R}R^{\top}\}$,
position $||P-\hat{P}||$, and linear velocity $||V-\hat{V}||$ is
demonstrated.

\begin{figure}[h]
	\centering{}\centering\includegraphics[scale=0.33]{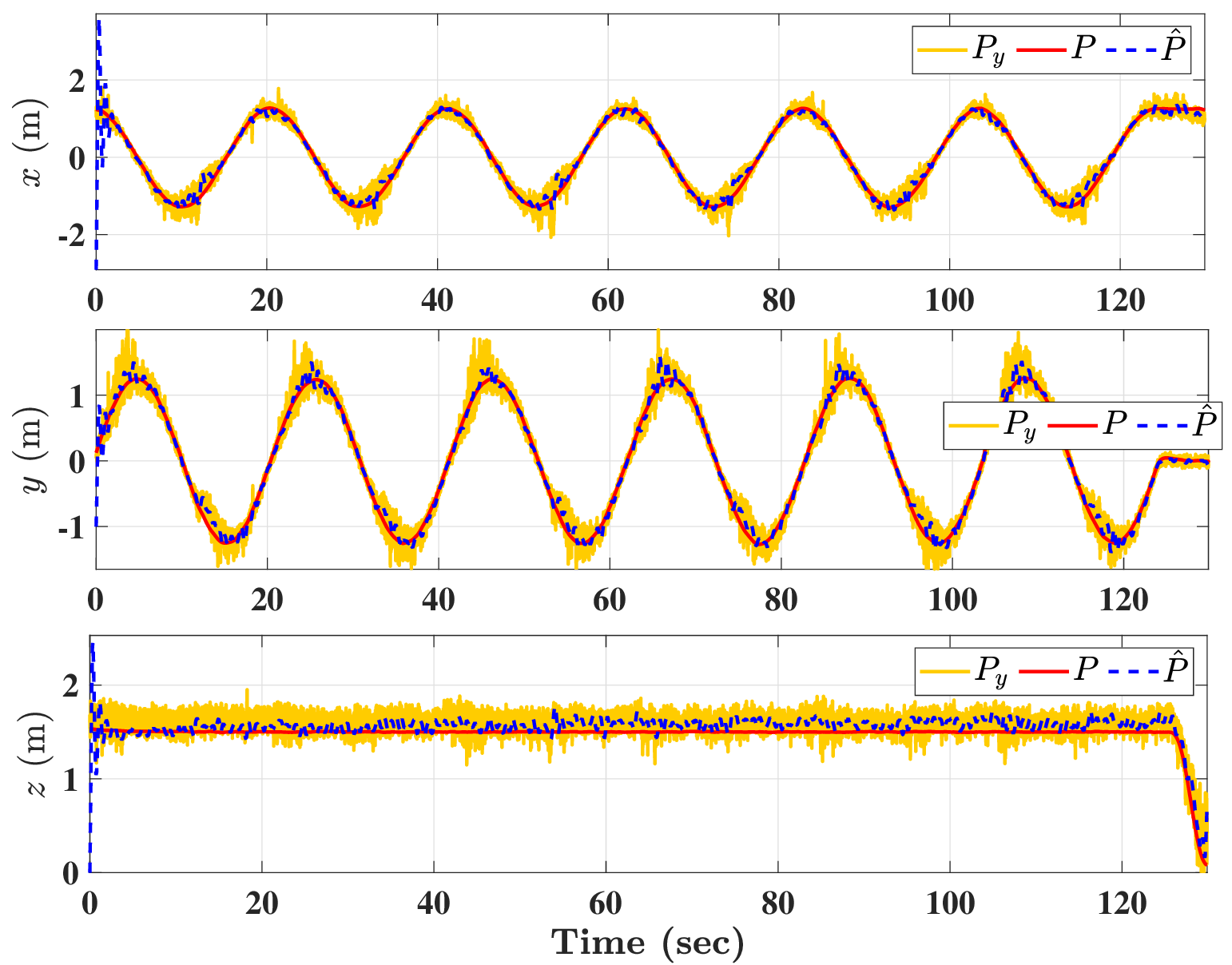}\caption{\label{fig:UWB_Py} Evolution trajectory of the true vehicle's position
		plotted as a red solid line, the estimated position depicted as a
		blue dash line, and reconstructed position marked as an orange solid
		line.}
\end{figure}

\begin{figure}[h]
	\centering{}\centering\includegraphics[scale=0.33]{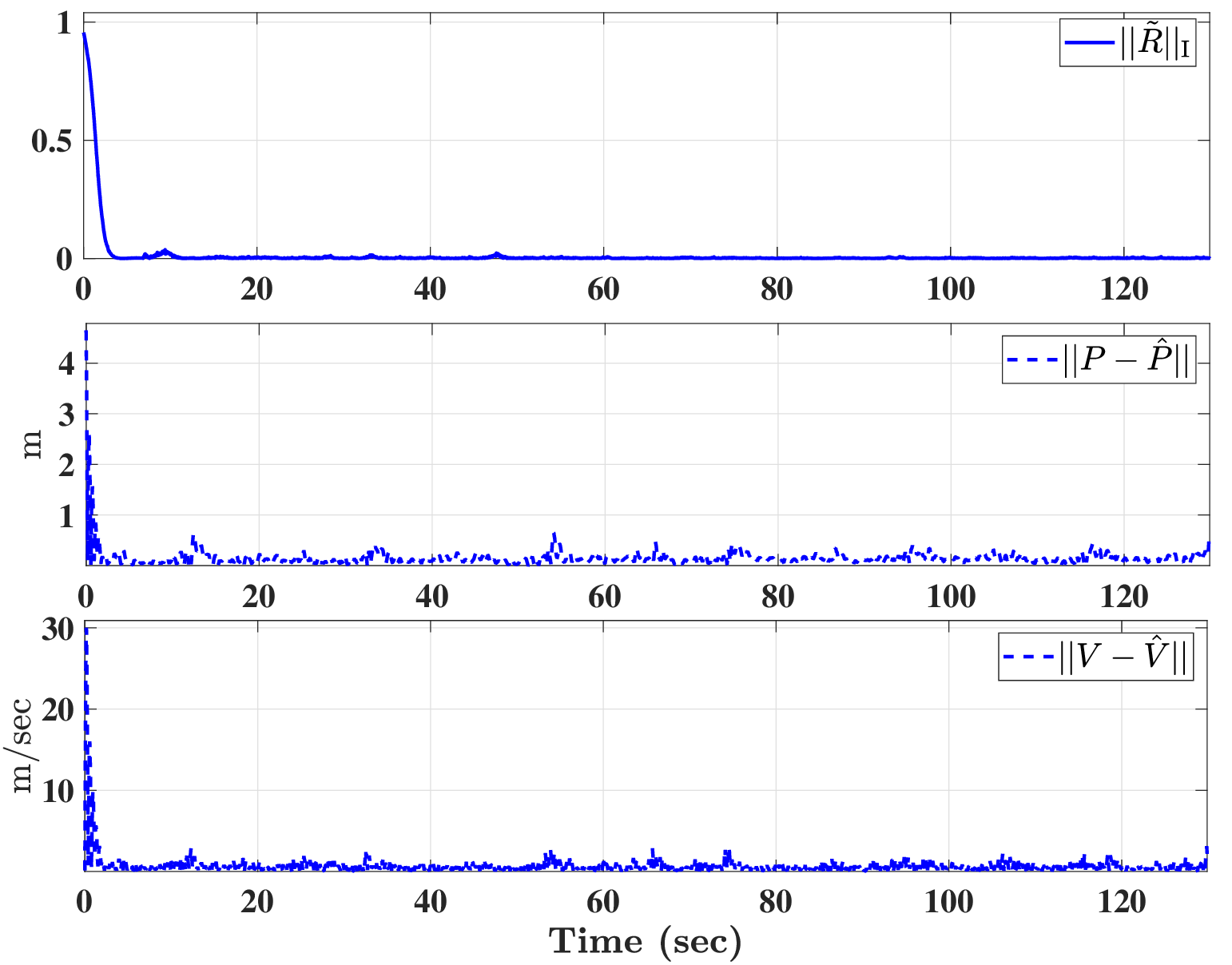}\caption{\label{fig:UWB_Err} Error convergence of orientation, position, and
		linear velocity.}
\end{figure}

\section{Conclusion \label{sec:SE3_Conclusion}}

The inertial navigation problem has been addressed using UWB-IMU fusion
to supply measurements to a nonlinear deterministic observer on the
Lie Group of $\mathbb{SE}_{2}\left(3\right)$. The observer successfully
estimates the vehicle's orientation, position, and linear velocity
ensuring exponential convergence from almost any initial condition.
The observer tackles IMU uncertainties and compensates for unknown
bias. The proposed observer revealed robust and strong estimation
performance when tested using a dataset of measurements collected
during a dataset of drone flight and benchmarked against the ground
truth. 

\section*{Acknowledgment}

The authors would like to thank \textbf{Maria Shaposhnikova} for proofreading
the article.

\noindent 
\bibliographystyle{IEEEtran}
\bibliography{bib_UWB}
\end{document}